\newtheorem{theorem}{Theorem}
\newtheorem{definition}[theorem]{Definition}
\newtheorem{lemma}[theorem]{Lemma}
\newcommand\dist[1]{\langle#1\rangle}
\newcommand\ceiling[1]{\left\lceil #1 \right\rceil}
\newcommand\bp{\mathbf{p}}
\newcommand\dpr[1]{(\kern-.1977em(#1)\kern-.1977em)}
\title{Binarization Trees and Random Number Generation} 
\author{Sung-il Pae\\
  \thanks{Sung-il Pae is with Department of Computer Engineering, Hongik
    University, Seoul, Korea. (email: pae@hongik.ac.kr)
This work was presented in part at 2016 IEEE Symposium on Information Theory
(ISIT 2016), July 10-16, 2016, Barcelona, Spain.
    This research was supported in part by a Hongik University grant
    and the National Research Foundation of Korea (NRF) grant funded by the
    Korean government (No. 2016R1D1A1B01016531). 
}}
\begin{document}
\maketitle

\begin{abstract}
\noindent An {\em $m$-extracting procedure\/} produces unbiased random bits
from a loaded dice with $m$ faces.  A {\em binarization\/} takes inputs from
an $m$-faced dice and produce bit sequences to be fed into a (binary)
extracting procedure to obtain random bits.  Thus, binary extracting
procedures give rise to an $m$-extracting procedure via a binarization.  An
entropy-preserving binarization is to be called {\em complete,\/} and such a
procedure has been proposed by Zhou and Bruck.  We show that there exist
complete binarizations in abundance as naturally arising from binary trees
with $m$ leaves.  The well-known {\em leaf entropy theorem\/} and a closely
related structure lemma play important roles in the arguments.
\end{abstract}

\begin{IEEEkeywords}
Random number generation, binarization, extracting
procedures, coin flipping, loaded dice, Peres algorithm, leaf entropy theorem.
\end{IEEEkeywords}

\section{Introduction}
An $m$-{\em extracting\/} procedure produces unbiased random bits using a
sequence from an i.i.d. source over an alphabet $\{0,1,\dots,m-1\}$,
regardless of its probability distribution $\dist{p_0,p_1,\dots,p_{m-1}}$.
When $m=2$, the source is a biased coin, and the famous von Neumann trick is
2-extracting: take a pair of coin flips and return random bits by the
following rule~\cite{VonNeumann51}:
\begin{equation}\label{eq:vonNeumann}
00\mapsto \lambda,\; 01\mapsto 0,\; 10\mapsto 1,\; 11\mapsto \lambda,
\end{equation}
where $\lambda$ indicates ``no output.''  Because $\Pr(01)=\Pr(10)=p_0p_1$,
the resulting bit is unbiased, and the {\em output rate}, the average number of
output per input, is $p_0p_1\le 1/4$.  Elias~\cite{elias72} and
Peres~\cite{peres92} extend it by taking inputs of length $n\ge2$ and
returning more than one bit at a time.  Both methods are asymptotically
optimal; as the input size $n$ increases, the output rate approaches the
information-theoretic upper bound $H(p_0)$, the Shannon
entropy~\cite{shannon64mathematical,DBLP:books/daglib/0016881}.

Elias's method generalizes naturally from 2-extracting to $m$-extracting
procedures for each $m>2$, as discussed in Elias's original
paper~\cite{elias72}.  However, a similar generalization of Peres's method
had been unknown for quite a while and was found only recently~\cite{pae15}.
In the meanwhile, Zhou and Bruck proposed a very interesting scheme that
transforms any binary extracting procedure into an $m$-extracting
procedure~\cite{DBLP:journals/corr/abs-1209-0726}.  For example, Peres method
is turned into an $m$-extracting procedure via a simple process called
``binarization.''  If the above-mentioned generalizations of Elias and Peres
are to be called direct generalizations, their scheme is rather a
meta-generalization.  Moreover, the resulting $m$-extracting procedure is
claimed to be asymptotically optimal if the given 2-extracting procedure is
asymptotically optimal.

In this paper, such entropy-preserving processes will be called
{\em complete binarizations} and will be shown to exist in abundance
as naturally arising from binary trees with $m$ leaves,
and Zhou-Bruck scheme is an instance of them.  The main tools in
our argument are the well-known leaf entropy theorem and a technical fact
which we call the structure lemma. 

Consider the following binary tree with 5 nodes and 6 leaves:
\begin{equation}\label{eq:example-tree}
\medskip
\raisebox{-.5\height}{\includegraphics[scale=0.8]{figs-3.mps}}
\end{equation}
The {\em leaf entropy theorem\/} states that, given a probability
distribution $\bp=\dist{p_0,\dots,p_5}$ on the leaves, the Shannon entropy
$H(\bp)$ is equal to the weighted sum $\sum_{i=1}^5 P_iH(\pi_i)$ of the
branching entropies $H(\pi_i)$ of the nodes, where the weight $P_i$ of node
$i$ is the sum of probabilities of the leaves under
it~\cite{massey83,DBLP:books/daglib/0016881,Knuth:art3}.
For example, $P_3=p_0+p_1+p_3+p_4$, and
$\pi_3=\dist{p_0+p_1+p_4,\,p_3}$.

As an interpretation of the theorem, consider a loaded dice $X$ with the
probability distribution $\bp$ of the 6 faces.  Each roll of $X$ generates,
according to the tree \eqref{eq:example-tree}, five possible coin tosses
$X_i$ with biases $\pi_i$, and $X_i$ has an output with probability $P_i$.
For example, if the dice roll $X$ is 1, then coins $X_1$, $X_3$, and $X_4$
give an output, as the tree is conveniently represented by squares (leaf,
dice roll) and circles (node, coin toss).  The leaf entropy theorem tells us
that the amount of information of the dice roll and the 5 coin tosses are the
same.  This suggests that $X_i$'s may be used as sources of randomness
to generate unbiased and independent random bits, possibly combined together,
at a rate as high as the entropy of $X$.

The mapping $X\mapsto (X_1,\dots,X_5)$ is a complete binarization: if $\Psi$
is 2-extracting, then $\Psi'(X)=\Psi(X_1)*\dots*\Psi(X_5)$ is 6-extracting.
Note that $X_i$'s are not independent.  However, $\Psi(X_i)$'s are
independent and therefore we can concatenate them.  Moreover, if $\Psi$ is
asymptotically optimal, then $\Psi'$ is also asymptotically optimal.  If one
or more of $X_i$'s are omitted, then the resulting $\Psi'$ is still
6-extracting, but not asymptotically optimal anymore.  And the same story
holds true of any binary tree.


\section{Extracting Procedures and Binarization}
\subsection{Extracting Procedures}
Our dice $X$ has $m$ faces with values $0, 1,\dots, m-1$ with probability
distribution $\dist{p_0,\dots,p_{m-1}}$.  A sequence
$x=x_1\dots x_n\in\{0,1,\dots,m-1\}^{n}$ is considered to be taken from $n$
repeated throws of the dice.  Summarized below are some necessary facts on
extracting procedures.  Refer to \cite{pae06-randomizing} and \cite{pae15}
for details.

\begin{definition}[\cite{peres92,pae06-randomizing}]\label{def:extracting}
A function $f\colon\{0,1,\dots,m-1\}^{n}\to \{0,1\}^\ast$ is {\em
  $m$-extracting} if for each pair $z_1,z_2$ in $\{0,1\}^\ast$ such that
$|z_1|=|z_2|$, we have $\Pr(f(x)=z_1)=\Pr(f(x)=z_2)$, regardless of
the distribution $\dist{p_0,\dots,p_{m-1}}$.
\end{definition}
\noindent
\begin{definition}
A function $\Psi\colon\{0,1,\dots,m-1\}^{*}\to \{0,1\}^\ast$ is called an
{\em $m$-extracting procedure} if its restriction on $\{0,1,\dots,m-1\}^n$ is
extracting, for every $n\ge0$.
\end{definition}
\noindent

Define $\Psi_1$ on $\{0,1\}^2$ by the rule \eqref{eq:vonNeumann} and call it
von Neumann function.  Extend it by, for an empty string,
\[
\Psi_1(\lambda)=\lambda,
\]
for a nonempty even-length input, 
\[
\Psi_1(x_1x_2\dots x_{2n})=\Psi_1(x_1x_2)*\cdots*\Psi_1(x_{2n-1}x_{2n}),
\]
where $*$ is concatenation, and for an odd-length input, drop the last bit
and take the remaining even-length bits.  Then the resulting function
$\Psi_1$ is a 2-extracting procedure.  Of course, there are more interesting
extracting procedures.  Asymptotically optimal 2-extracting procedures like
Elias's~\cite{elias72,pae-loui05,pae06-randomizing} and
Peres's~\cite{peres92,DBLP:journals/ipl/Pae13,pae15} also extend von Neumann
function but do not simply repeat it.

Denote by $S_{(n_0,n_1,\dots,n_{m-1})}$ the subset of
$\{0,1,\dots,m-1\}^{n}$ that consists of sequences with $n_i$ $i$'s.
Then
\[
\{0,1,\dots,m-1\}^{n}=\bigcup_{n_0+n_1+\dots+n_{m-1}=n} S_{(n_0,n_1,\dots,n_{m-1})},
\]
and each $S_{(n_0,n_1,\dots,n_{m-1})}$ is an {\em equiprobable\/} subset of
elements whose probability of occurrence is $p_0^{n_0}p_1^{n_1}\cdots
p_{m-1}^{n_{m-1}}$.  The size of an equiprobable set is given by a
multinomial coefficient like
\[
{n\choose{n_0,n_1,\dots,n_{m-1}}}=\frac{n!}{n_0! n_1!\cdots n_{m-1}!}.
\]
When $m=2$, an equiprobable set $S_{(l,k)}$ is also written as $S_{n,k}$,
where $n=l+k$, and its size can also be written as an equivalent binomial
coefficient as well as the multinomial one:
\[
{n\choose k}={n\choose{l,k}}.
\]

\newcommand\bN{\bf N} Extracting functions can be characterized using the
concept of {\em multiset}.  A multiset is a set with repeated elements;
formally, a multiset $M$ on a set $S$ is a pair $(S,\nu)$, where
$\nu\colon S\to\bN$ is a multiplicity function and $\nu(s)$ is called the
{\em multiplicity}, or the number of {\em occurrences} of $s\in S$.  The size
$|M|$ of $M=(S,\nu)$ is $\sum_{s\in S}\nu(s)$.  For multisets $A$ and $B$,
$A\uplus B$ is the multiset such that an element occurring $a$ times in $A$
and $b$ times in $B$ occurs $a+b$ times in $A\uplus B$.  So
$|A\uplus B|=|A|+|B|$, and the operation $\uplus$ is associative.

When we write $x\in M=(S,\nu)$, it simply means that $x\in S$.  However, when
we use the expression ``$x\in M$'' as an index, the multiplicity of the
elements is taken into account.  For example, for multisets $A$ and $B$, the
multiset $A\uplus B$ can be redefined as $\{x\mid x\in A \mbox{ or } x\in
B\}$.

By Definition~\ref{def:extracting}, the image of an extracting function
consists of multiple copies of $\{0,1\}^N$, the exact full set of binary
strings of various lengths $N$'s.  For example, von Neumann procedure defined
above sends $\{0,1\}^6$ to 12 copies of $\{0,1\}$, 6 copies $\{0,1\}^2$, and
one copy of $\{0,1\}^3$.
\begin{definition}[\cite{pae15}]
A multiset $A$ of bit strings is {\em extracting} if, for each $z$ that
occurs in $A$, all the bit strings of length $|z|$ occur in $A$ the same time
as $z$ occurs in $A$.
\end{definition} 

For multisets $A$ and $B$ of bit strings, define a new multiset
$A*B=\{s*t\mid s\in A, t\in B\}$, and this operation is associative, too. If
$A$ and $B$ are extracting, both $A*B$ and $A\uplus B$ are extracting.
Denote by $f\dpr{C}$ the multiset $\{f(x)\mid x\in C\}$, or equivalently,
$(f(C),\nu)$ with $\nu(z)=|f^{-1}(z)\cap C|$ for $z\in f(C)$.  Note that
$|f\dpr{C}|=|C|$.  For a disjoint union $C\cup D$, we have $f\dpr{C\cup
D}=f\dpr{C}\uplus f\dpr{D}$.
With this notation, $\Psi_1\dpr{\{0,1\}^6}=12\cdot\{0,1\}\uplus
6\cdot\{0,1\}^2\uplus1\cdot\{0,1\}^3$.

The following lemma reinterprets the definition of extracting function in
terms of equiprobable sets and their images.
\begin{lemma}[\cite{pae15}]\label{lemma:extracting-multiset}
A function $f\colon\{0,1,\dots,m-1\}^{n}\to \{0,1\}^\ast$ is extracting if
and only if $f\dpr{S_{(n_0,n_1,\dots,n_{m-1})}}$ is extracting for each tuple
$(n_0,n_1,\dots,n_{m-1})$ of nonnegative integers such that
$n_0+n_1+\cdots+n_{m-1}=n$.
\end{lemma}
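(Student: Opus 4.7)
The plan is to turn both sides of the biconditional into a single polynomial identity in $p_0,\dots,p_{m-1}$ and match coefficients. First I would fix a bit string $z$ and decompose $f^{-1}(z)$ by intersecting with each equiprobable set. Since every element of $S_{(n_0,\dots,n_{m-1})}$ occurs with probability $p_0^{n_0}\cdots p_{m-1}^{n_{m-1}}$, this yields the representation
\[
\Pr(f(x)=z) \;=\; \sum_{n_0+\cdots+n_{m-1}=n} \mu_z(n_0,\dots,n_{m-1})\, p_0^{n_0}\cdots p_{m-1}^{n_{m-1}},
\]
where $\mu_z(n_0,\dots,n_{m-1})$ denotes the multiplicity of $z$ in the multiset $f((S_{(n_0,\dots,n_{m-1})}))$. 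Thus $\Pr(f(x)=z)$ is a homogeneous degree-$n$ polynomial whose coefficients are precisely the multiplicities featured in the statement.

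The $(\Leftarrow)$ direction is then immediate: if each $f((S_{(n_0,\dots,n_{m-1})}))$ is extracting, then $\mu_{z_1}(n_0,\dots,n_{m-1}) = \mu_{z_2}(n_0,\dots,n_{m-1})$ whenever $|z_1|=|z_2|$, so the two polynomials agree term-by-term and $\Pr(f(x)=z_1) = \Pr(f(x)=z_2)$ for every distribution. For the $(\Rightarrow)$ direction, the extracting hypothesis says that for any same-length $z_1, z_2$ the difference polynomial
\[
P(p_0,\dots,p_{m-1}) \;=\; \sum_{n_0+\cdots+n_{m-1}=n}\bigl(\mu_{z_1}(n_0,\dots,n_{m-1})-\mu_{z_2}(n_0,\dots,n_{m-1})\bigr)\, p_0^{n_0}\cdots p_{m-1}^{n_{m-1}}
\]
vanishes at every point of the probability simplex. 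One must then conclude that each coefficient $\mu_{z_1}-\mu_{z_2}$ is zero, which is exactly the extracting multiset condition for every tuple $(n_0,\dots,n_{m-1})$.

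The one step that requires care — and the main obstacle in the write-up — is the passage from ``$P$ vanishes on the simplex'' to ``$P$ is the zero polynomial,'' because the simplex is a proper subvariety of the ambient space and a naive ``coefficient comparison'' is unjustified. I would resolve this by homogeneity: since $P$ is homogeneous of degree $n$ and every point of the open positive orthant can be rescaled onto the simplex, $P$ actually vanishes on the whole positive orthant and hence identically as a polynomial. Linear independence of the monomials $p_0^{n_0}\cdots p_{m-1}^{n_{m-1}}$ in the polynomial ring then forces each coefficient to vanish. This rescaling trick is the technical crux, and it neatly sidesteps the need to invoke a Bernstein-basis linear-independence argument on the simplex itself.
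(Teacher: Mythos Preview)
Your argument is correct. Note, however, that the paper does not actually supply a proof of this lemma: it is quoted from \cite{pae15} and stated here without proof, so there is no in-paper argument to compare against.

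For what it is worth, the polynomial-identity strategy you outline is the standard one for statements of this type. The homogeneity observation---rescaling an arbitrary point of the open positive orthant onto the simplex so that $P$ vanishes on a full-dimensional open set and hence identically---is a clean way to handle the constraint $\sum_i p_i = 1$; the common alternative of substituting $p_{m-1} = 1 - p_0 - \cdots - p_{m-2}$ and then arguing linear independence of the resulting Bernstein-type monomials in $m-1$ free variables reaches the same conclusion but is slightly more cumbersome to write out.
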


\subsection{Binarization}
Given a function $\phi\colon\{0,1,\dots,m-1\}\to\{0,1,\lambda\}$, $\phi(X)$
is a Bernoulli random variable with distribution $\dist{p,q}$, where
\[
p=\sum_{\phi(i)=0}p_i/s,\; q=\sum_{\phi(i)=1}p_i/s,\,\text{and}\;
 s=\sum_{\phi(i)\not=\lambda}p_i.
\]
Extend $\phi$ to $\{0,1,\dots,m-1\}^n$, by letting, for $x=x_1\dots x_n$,
$\phi(x)=\phi(x_1)*\dots*\phi(x_n)$.  Then, for an equiprobable set
$S=S_{(n_0,\dots,n_{m-1})}$, its image under $\phi$ is also equiprobable,
that is,
\[
\phi(S)=S_{(l, k)},
\]
where
\[
l=\sum_{\phi(i)=0}n_i,\quad k=\sum_{\phi(i)=1}n_i.
\]

A {\em binarization\/} takes a sequence over $\{0,1,\dots,m-1\}$ and outputs
several binary sequences that are to be separately fed into a binary
extracting procedure and then concatenated together to obtain random bits.

\begin{definition} A collection of functions
$\Phi=\{\Phi_i:\{0,1,\dots,m-1\}\to\{0,1,\lambda\}\mid i=1,\dots,M\}$ is
called a {\em binarization} if, when extended to $\{0,1,\dots,m-1\}^n$, given
a 2-extracting procedure $\Psi$, the mapping
$x\mapsto\Psi'(x)=\Psi(\Phi_1(x))*\dots*\Psi(\Phi_M(x))$ is an $m$-extracting
function.  Here, each $\Phi_i$ is called a component of $\Phi$, and we often
regard $\Phi$ as a mapping on $\{0,1,\dots,m-1\}^*$ given by
$\Phi(x)=(\Phi_1(x),\dots,\Phi_M(x))$.  For an asymptotically optimal
2-extracting procedure $\Psi$, if the resulting $\Psi'$ is asymptotically
optimal, then $\Phi$ is called a {\em complete} binarization.
\end{definition}

\newcommand\supp{\mathrm{supp}}
Now, for a function $\phi\colon\{0,1,\dots,m-1\}\to\{0,1,\lambda\}$, let
\begin{align*}
\supp_0(\phi)&=\{x\mid \phi(x)=0\},\\
\supp_1(\phi)&=\{x\mid \phi(x)=1\},\\
\supp(\phi)&=\{x\mid \phi(x)\not=\lambda\}=\supp_0(\phi)\cup\supp_1(\phi),
\end{align*}
and call them 0-support, 1-support, and support of $\phi$, respectively. 
Call $\phi$ {\em degenerate\/} if its 0-support or 1-support is empty so
that $\phi(X)$ is a degenerate Bernoulli random variable.


\newcommand{\leaf}{\mathrm{leaf}}
Consider a binary tree with $m$ external nodes labeled uniquely with
$0,1,\dots,m-1$.  For an internal node $v$ define a function
$\phi_v\colon \{0,1,\dots,m-1\}\to\{0,1,\lambda\}$ as follows:
\[
\phi_v(x)=\begin{cases}
  0,&\text{if $x\in\leaf_0(v)$,}\\
  1,&\text{if $x\in\leaf_1(v)$,}\\
  \lambda,&\text{otherwise.}
 \end{cases}
\]
where $\leaf_0(v)$ ($\leaf_1(v)$, respectively) is the set of external nodes
on the left (right, respectively) subtree of $v$.  Since there are exactly
$m-1$ internal nodes, we uniquely name them with $1,\dots,{m-1}$, with 1 the
root node, and the corresponding functions $\Phi_1,\dots,\Phi_{m-1}$.  
Call such trees {\em $m$-binarization trees}.  

For example, the tree \eqref{eq:example-tree} that we considered in the
introduction is a 6-binarization tree and defines the following functions:
\begin{equation*}
\small
\medskip
\begin{tabular}{c|c|c|c|c|c}
\hline
$x$& $\Phi_1(x)$&$\Phi_2(x)$ & $\Phi_3(x)$ & $\Phi_4(x)$ & $\Phi_5(x)$\\
\hline
0&1&$\lambda$&0&1&1\\
1&1&$\lambda$&0&0&$\lambda$\\
2&0&0&$\lambda$&$\lambda$&$\lambda$\\
3&1&$\lambda$&1&$\lambda$&$\lambda$\\
4&1&$\lambda$&0&1&0\\
5&0&1&$\lambda$&$\lambda$&$\lambda$\\
\hline
\end{tabular}
\end{equation*}

\begin{theorem}\label{thm:construction}
For an $m$-binarization tree, the set of associated functions
$\Phi=\{\Phi_1,\dots,\Phi_{m-1}\}$ is a complete binarization.  Also, any
nonempty subset of $\Phi$ is a binarization.
\end{theorem}
For a proof, we use the leaf entropy theorem together with a technical lemma
that we call {\em Structure Lemma.}   The coin $X_i=\Phi_i(X)$
has an output with probability $P_i=\sum_{j\in\supp(\Phi_i)}p_j$, and its
distribution is $\pi_i=\dist{p,q}$, where
\[
p=\sum_{j\in\supp_0(\Phi_i)}p_j/P_i,\quad\;
q=\sum_{j\in\supp_1(\Phi_i)}p_j/P_i.
\]
Stated below is the leaf entropy theorem in our context of $m$-binarization
trees.
\begin{theorem}[Leaf Entropy Theorem]\label{thm:entropy}
The branching entropies of $\Phi_i(X)$ weighted by the probability $P_i$ sum
up to the entropy of $X$:
\[
H(X)=\sum_{i=1}^{m-1}P_i H(\pi_i).
\]
\end{theorem}
The following is the main technical tool of this work and we prove it in
Section~\ref{sec:proofs}.
\begin{lemma}[Structure Lemma]\label{lemma:structure}
Let $\Phi=\{\Phi_1,\dots,\Phi_{m-1}\}$ be the set of functions defined by an
$m$-binarization tree.  Then the mapping $\Phi\colon x\mapsto \Phi(x)=
(\Phi_1(x),\dots,\Phi_{m-1}(x))$ gives a one-to-one correspondence between
an equiprobable subset $S=S_{(n_0,n_1,\dots,n_{m-1})}$ and
$\Phi_1(S)\times\cdots\times\Phi_{m-1}(S)$.
\end{lemma}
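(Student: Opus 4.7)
The plan is to induct on $m$, the number of leaves of the binarization tree. The base case $m=2$ is immediate: the tree has a single internal node with leaves $0$ and $1$, $\Phi_1$ is the identity on $\{0,1\}$, and the correspondence on any $S_{(n_0,n_1)}$ is the identity map.

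For the inductive step I would split the tree at its root $r$ into a left subtree $T_a$ with leaf set $L_a$ and a right subtree $T_b$ with leaf set $L_b$, where $|L_a|+|L_b|=m$. The tree's functions split into three groups: the root function $\Phi_r$ and the component functions attached to the internal nodes of $T_a$ and of $T_b$. The crucial observation is that $\Phi_r$ never outputs $\lambda$, since every leaf lies under $r$; hence $\Phi_r(x)\in\{0,1\}^n$ is a length-$n$ binary string that records, position by position, whether $x_j\in L_a$ (bit $0$) or $x_j\in L_b$ (bit $1$). Furthermore, for an internal node $v$ of $T_a$, $\Phi_v(x_j)=\lambda$ whenever $x_j\in L_b$, and the restriction of $\Phi_v$ to $L_a$ coincides with the analogous function $\tilde\Phi_v$ of the subtree $T_a$, viewed as an $|L_a|$-binarization tree in its own right. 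Consequently, if $x_a$ denotes the subsequence of $x$ on the $L_a$-positions (and $x_b$ similarly), then $\Phi_v(x)=\tilde\Phi_v(x_a)$ for $v\in T_a$, with an analogous identity for $T_b$.

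Writing $l_a=\sum_{i\in L_a}n_i$ and $l_b=\sum_{i\in L_b}n_i$, one sees that $\Phi_r(S)=S_{(l_a,l_b)}$, while $\Phi_v(S)=\tilde\Phi_v(S_a)$ for $v\in T_a$, where $S_a$ is the equiprobable set indexed by the multiplicities $(n_i)_{i\in L_a}$ (and similarly $S_b$). The full map then factors as $x\mapsto(\Phi_r(x),\,(\tilde\Phi_v(x_a))_{v\in T_a},\,(\tilde\Phi_v(x_b))_{v\in T_b})$. The inductive hypothesis applied to $T_a$ and $T_b$ gives bijections $S_a\leftrightarrow\prod_{v\in T_a}\tilde\Phi_v(S_a)$ and $S_b\leftrightarrow\prod_{v\in T_b}\tilde\Phi_v(S_b)$. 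To invert the full map, read $w=\Phi_r(x)$ to recover the $L_a/L_b$ pattern, apply the two inductive inverses to recover $x_a$ and $x_b$, and interleave them according to $w$. A cardinality check via the multinomial identity $\binom{n}{n_0,\dots,n_{m-1}}=\binom{n}{l_a,l_b}\binom{l_a}{(n_i)_{i\in L_a}}\binom{l_b}{(n_i)_{i\in L_b}}$ confirms that this bijection surjects onto the full product $\prod_{i=1}^{m-1}\Phi_i(S)$. The main obstacle is bookkeeping: identifying $\Phi_v$ (as a component of the full tree) restricted to $L_a$ with the analogous component $\tilde\Phi_v$ of the subtree, and verifying that the interleaving step is unambiguous. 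Once these are pinned down the inductive step is routine, and Theorem~\ref{thm:construction} follows by combining this lemma with Lemma~\ref{lemma:extracting-multiset} and the Entropy Lemma exactly as in the proof of Theorem~\ref{thm:new-bin}.
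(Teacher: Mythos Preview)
Your argument is correct. The one small case you should mention explicitly is when one of the root's children is an external node (so, say, $|L_a|=1$): then $T_a$ has no internal nodes, $S_a$ is a singleton, the empty product $\prod_{v\in T_a}\tilde\Phi_v(S_a)$ is a one-point set, and the inductive hypothesis holds vacuously. With that remark the recursion closes cleanly.

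Your route, however, is genuinely different from the paper's. The paper does \emph{not} induct on the tree to build the bijection. Instead it argues in two separate pieces: first a cardinality computation, combining node-by-node multinomial identities to show $|S|=\prod_i|\Phi_i(S)|$; second, an injectivity argument that does not use the tree recursion at all but rather observes that the binarization tree is simultaneously a \emph{prefix-code tree} for the alphabet $\{0,\dots,m-1\}$, and gives an explicit pair of decoding/encoding procedures establishing a bijection between $\mathrm{code}(x)$ and the tuple $\Phi(x)$. Since the prefix code is uniquely decodable, $\Phi$ is injective, and together with the cardinality match this yields the bijection. Your structural induction is arguably cleaner and yields the inverse map directly, but the paper's prefix-code viewpoint has its own payoff: it makes the link between binarization trees and source coding explicit, which is natural given that the companion Entropy Lemma is exactly the chain rule for entropy along a code tree.
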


\begin{proof}[Proof of Theorem~\ref{thm:construction}]
Let $\Psi$ be a 2-extracting procedure.  For an equiprobable set $S$, each
$S_i=\Phi_i(S)$ is equiprobable, and thus $\Psi\dpr{S_i}$ is extracting, by
Lemma~\ref{lemma:extracting-multiset}.  Now, by Lemma~\ref{lemma:structure},
$\Psi'\dpr{S}=\Psi\dpr{S_1}*\dots*\Psi\dpr{S_{m-1}}$.  Since each
$\Psi\dpr{S_i}$ is extracting, their concatenation $\Psi'\dpr{S}$ is extracting,
by the associativity of concatenation of multisets and the fact that
concatenation of extracting multisets is extracting.  The same holds true
even if we omit some components of $\Phi$.

Since the coin $X_i=\Phi_i(X)$ has the distribution $\pi_i$ and outputs with
the probability $P_i$, if $\Psi$ is asymptotically optimal, then the output
rate of $\Psi(X_i)$ converges to $P_iH(\pi_i)$ as the input size
$n\to\infty$.  Therefore, the output rate of $\Psi'$ approaches to
$\sum P_iH(\pi_i)$, which equals $H(X)$ by the leaf entropy theorem.
\end{proof}

\section{Examples}

\subsection{An Entropy-Preserving Binarization}\label{sec:new}
For a symbol $x\in\{0,1,\dots,m-1\}$ and $1\le i\le m-1$, consider
\[
x^{(i)}=\left\{
  \begin{split}
    0,\quad & x<i, \\
    1,\quad & x=i, \\
    \lambda,\quad & x>i.
  \end{split}
 \right.
\]
When $m=6$, we have their values as follow: 
\smallskip
\begin{equation*}
\small
\begin{tabular}{c|c|c|c|c|c|c}
\hline
$x$& $\Pr(x)$&$x^{(1)}$ & $x^{(2)}$ & $x^{(3)}$ & $x^{(4)}$& $x^{(5)}$\\
\hline
0&$p_0$&0&0&0&0&0\\
1&$p_1$&1&0&0&0&0\\
2&$p_2$&$\lambda$&1&0&0&0\\
3&$p_3$&$\lambda$&$\lambda$&1&0&0\\
4&$p_4$&$\lambda$&$\lambda$&$\lambda$&1&0\\
5&$p_5$&$\lambda$&$\lambda$&$\lambda$&$\lambda$&1\\
\hline
\end{tabular}\smallskip
\end{equation*}
These functions are associated with the following 6-binarization tree:
\begin{equation*}
\raisebox{-.5\height}{\includegraphics[scale=0.8]{figs-1.mps}}
\end{equation*}
For $x=x_1\dots x_n\in\{0,1,\dots,m-1\}^n$, define $x^{(i)}=x_1^{(i)}*\dots*
x_n^{(i)}$.  So for a sequence $x$ of length $n$, $x^{(i)}$ is a binary
sequence of length at most $n$.  For a binary extracting
procedure $\Psi$, the function $\Psi':\{0,1,\dots,m-1\}^n\to\{0,1\}^*$, defined
by
\[
\Psi'(x)=\Psi(x^{(1)})*\dots*\Psi(x^{(m-1)}),
\]
is $m$-extracting, and if $\Psi$ is asymptotically optimal, then so is $\Psi'$. 

To illustrate the structure lemma, for $m=4$, consider an equiprobable subset
$S=S_{(1,2,1)}\subset\{0,1,2\}^4$, and 
let $S^{(i)}=\{x^{(i)}\mid x\in S\}$.  Then, 
$S^{(i)}$ is another equiprobable set in $\{0,1\}^{n'}$.  For example, for
$S=S_{(1,2,1)}$, observe that
\[
\small
\medskip
\begin{split}
\begin{tabular}{c|c|c}
\hline
$x$&$x^{(2)}$&$x^{(1)}$\\
\hline
0112&0001&011\\
0121&0010&011\\
0211&0100&011\\
1012&0001&101\\
1021&0010&101\\
1102&0001&110\\
1120&0010&110\\
1201&0100&101\\
1210&0100&110\\
2011&1000&011\\
2101&1000&101\\
2110&1000&110\\
\hline
\end{tabular}
\end{split}
\] 
and we can see that, as multiset images of $x^{(1)}$ and $x^{(2)}$,
\[
\begin{split}
S^{\dpr{1}}&=4\cdot S_{(1,2)},\\
S^{\dpr{2}}&=3\cdot S_{(3,1)}.
\end{split}
\]
Note that 
\[
|S_{(1,2,1)}|=\frac{4!}{1!2!1!}=\frac{3!}{1!2!}\times\frac{4!}{3!1!}
=|S_{(1,2)}|\times|S_{(3,1)}|.
\]
Of course, by the structure lemma, $S$ is in one-to-one correspondence with
$S^{(1)}\times S^{(2)}$.

\subsection{Zhou-Bruck Binarization}\label{sec:zhou-bruck}
The following method was proposed by Zhou and
Bruck~\cite{DBLP:journals/corr/abs-1209-0726}.  For $x\in\{0,1,\dots,m-1\}$,
let $x'$ be the $\ceiling{\lg m}$-bit binary expansion of $x$, and also for
$\alpha\in\{0,1\}^*$, let
\[
x^\alpha=\begin{cases}
  a,&\text{if $\alpha a$ is a prefix of $x'$,}\\
  \lambda,&\text{otherwise.}
  \end{cases}
\]
That is, $x^\alpha$ is the bit that immediately follows $\alpha$ in the
standard binary expansion of $x$. 
For example, when $m=6$, we have the following functions:
\smallskip
\begin{equation*}
\small
\begin{tabular}{c|c|c|c|c|c|c|c}
\hline
$x$&$x'$&$x^\lambda$ & $x^0$ & $x^1$ & $x^{00}$& $x^{01}$ & $x^{10}$\\
\hline
0&000&0&0&$\lambda$&0&$\lambda$&$\lambda$\\
1&001&0&0&$\lambda$&1&$\lambda$&$\lambda$\\
2&010&0&1&$\lambda$&$\lambda$&0&$\lambda$\\
3&011&0&1&$\lambda$&$\lambda$&1&$\lambda$\\
4&100&1&$\lambda$&0&$\lambda$&$\lambda$&0\\
5&101&1&$\lambda$&0&$\lambda$&$\lambda$&1\\
\hline
\end{tabular}\smallskip
\end{equation*}
%
After the degenerate $x^1$ is removed, they are associated with the
following 6-binarization tree:
\begin{equation*}
\raisebox{-.5\height}{\includegraphics[scale=0.8]{figs-2.mps}}
\smallskip
\end{equation*}
The mapping $x\mapsto\Psi'(x)=\Psi(x^{\lambda})*\dots*\Psi(x^{1\dots1})$ is an
asymptotically optimal $m$-extracting procedure if $\Psi$ is asymptotically
optimal.



\section{The Structure Lemma}\label{sec:proofs}

Given a binarization tree and its subtree $T$, let $X_T$ be the restriction of
$X$ on the leaf set of $T$.  The leaf entropy theorem is proved by
induction using the following recursion,\footnote{Recall that a binary
  tree is recursively defined to be a set of nodes that is either an empty
  set (a terminal node), or consists of a root node, a left subtree and a
  right subtree, both of which are binary trees.}
\begin{equation}\label{eq:leaf-entropy-rec}
H(X_T)=\begin{cases}
0,& \text{if $T$ is a leaf,}\\
H(\pi)+pH(X_{T_1})+qH(X_{T_2}), &\text{otherwise,}
\end{cases}
\end{equation}
where, for nonempty $T$, $T_1$ and $T_2$ are the left and right
subtrees and $\pi=\dist{p,q}$ is the branching distribution of the root of
$T$.  The structure lemma holds for a similar reason.
\begin{proof}[Proof of Structure Lemma]
For an equiprobable subset $S=S_{(n_0,\dots,n_{m-1})}$ and a subtree $T$ of
the given binarization tree, let $S_T$ be the restriction of $S$ on the leaf
set of $T$.  Then we have a similar recursion
\begin{equation}\label{eq:structure-rec}
S_T\cong\begin{cases}
    \{0\},& \text{if $T$ is a leaf,}\\
    S_{(l,k)}\times S_{T_1}\times S_{T_2}, &\text{otherwise,}
  \end{cases}
\end{equation}
where, for nonempty $T$ and $\phi$ the branching function associated with the
root of $T$, $T_1$ and $T_2$ are the left and right subtrees and
\[
l=\sum_{\phi(i)=0}n_i,\quad k=\sum_{\phi(i)=1}n_i.
\]

First, if $T$ is a leaf with label $i$, then $S_T$ is a singleton set that
consists of a single string of $n_i$ $i$'s, hence the first part of
\eqref{eq:structure-rec}.  When $T$ is nonempty, the correspondence
$S_T\to S_{(l,k)}\times S_{T_1}\times S_{T_2}$ is given by
$x\mapsto (\phi(x),x_{T_1},x_{T_2})$, where $x_{T_1}$ and $x_{T_2}$ are
restrictions of $x$.  This correspondence is one-to-one because $\phi(x)$
encodes the branching with which $x$ is recovered from $x_{T_1}$ and
$x_{T_2}$, giving an inverse mapping
$S_{(l,k)}\times S_{T_1}\times S_{T_2}\to S_T$.  For example, consider tree
\eqref{eq:example-tree} and suppose that $T$ is the subtree rooted at the
node {\it 3}.  For $x=102235315401$, the following shows the restrictions of
$x$ and $\Phi_i(x)$'s.
\[
\medskip
\raisebox{-.5\height}{\includegraphics[scale=1]{figs-300.mps}}
\]
By taking symbols one by one from $x_{T_1}=101401$ and $x_{T_2}=33$, according
to $\Phi_3(x)=00110000=(b_i)_{i=1}^8$, if $b_i$ is 0, from $x_{T_1}$,
otherwise, from $x_{T_2}$, we recover $x_T=10331401$.

Induction on subtrees proves the lemma.
\end{proof}
See \cite{DBLP:conf/isit/Pae16} for an alternative proof.



\section{Remarks}
\subsection{Leaf Entropy Theorem and Structure Lemma}
The leaf entropy theorem is well known in the information theory, and it
follows from the grouping rule of entropy (see, e.g., the defining property 3
of entropy in Shannon's original work \cite[p. 49]{shannon64mathematical}, or
Problem 2.27 of \cite{DBLP:books/daglib/0016881}), which is essentially the
recursion \eqref{eq:leaf-entropy-rec} in Section \ref{sec:proofs}.  As we
saw, the structure lemma is proved similarly, hinting that they are closely
related.  In fact, using the asymptotic equipartition property (AEP)
\cite{DBLP:books/daglib/0016881}, the structure lemma implies the leaf
entropy theorem.

For a large $n$, the typical set $A^{(n)}$ consists of
$x=(x_1,x_2,\dots,x_n)$ that contains about $n_0=p_0n$ 0's, $n_1=p_1n$ 1's,
$\dots$, $n_{m-1}=p_{m-1}n$ $(m-1)$'s.  Let $S=S_{(n_0,\dots,n_{m-1})}$. The
asymptotic equipartition property implies that
$\lim_{n\to\infty}\frac1n\log|S|=H(X)$.  On the other hand, by Structure
Lemma, $S=S_1\times\dots\times S_{m-1}$, where $S_i=\Phi_i(S).$
Note that $S_i=S_{(l_i,k_i)}$, where
\[
l_i=\sum_{j\in\supp_0(\Phi_i)}n_j,\quad
k_i=\sum_{j\in\supp_1(\Phi_i)}n_j,
\]
and $(l_i+k_i)/n\to P_i$ and $\dist{l_i/n, k_i/n}\to\pi_i$ as $n\to\infty$.
Since $\frac1{(l_i+k_i)}\log|S_i|\to H(\pi_i)$, we have
\[
\frac1n\log|S_i|\to P_iH(\pi_i),
\]
and
\[
  \frac1n\log|S|
      =\frac1n\sum_{i=1}^{m-1}\log|S_i|
      \to\sum_{i=1}^{m-1}P_iH(\pi_i),
\]
as $n\to\infty$.

\subsection{Generalization of Structure Lemma to Non-Binary Trees}
The leaf entropy theorem holds for general trees.  The structure lemma also
can be generalized to trees whose nodes are not necessarily of degree 2 and
whose leaves have unique labels, although in that case, the naming
``binarization tree'' might not be appropriate.  

\subsection{$m$-ary Asymptotically Optimal Extracting Algorithm}
As an immediate application, take the original binary Peres procedure $\Psi$
and apply Theorem~\ref{thm:construction}.  The resulting $\Psi'$ is an
$m$-ary asymptotically optimal extracting procedure.  As with the original
Peres algorithm and its generalization, $\Psi'$ runs in $O(n\log n)$ time,
for a fixed $m$, because $\Phi_i(x)$ is computed in linear time and
$|\Phi_i(x)|\le n$ for each $i$.

\subsection{Other Applications of Binarization Trees}
Peres algorithm is a simple extracting algorithm defined recursively using
the famous von Neumann trick as a base, whose output rate approaches the
information-theoretic upper bound~\cite{peres92}.  However, it is relatively
hard to explain why it works, and it appears partly due to this difficulty
that its generalization to many-valued source was discovered only
recently~\cite{pae15}.  Binarization tree provides a new unified way to
understand the original Peres algorithm and its generalizations and
facilitates finding many new Peres-style recursive algorithms~\cite{pae18p}.
By coming up with an appropriate binarization tree (not necessarily based on
binary tree but possibly a general tree), a Peres-style recursion follows.
As with our main result, Theorem~\ref{thm:construction}, the Peres-style
recursive algorithms are extracting by the corresponding structure lemma, and
asymptotically optimal by the leaf entropy theorem.

The structure lemma gives many different ways to factorize a set of
$m$-combinations into sets of binary combinations.  We can use this idea to
give a ranking on $m$-combinations, which can be seen as a mixed-radix number
system whose radices are binomial numbers~\cite{pae18r}.

\subsection{Binarization Trees and DDG-trees}
DDG-trees (discrete distribution generation trees) work in the opposite way
of binarization
trees~\cite{knuth-yao76,han-hoshi97,pae-loui05,pae06-randomizing}.  With a
binarization tree, the leaves correspond to the source and various coins are
produced.  With DDG trees, the nodes correspond to
the source and target symbols of the leaves are produced.  However, the
essential difference is that DDG has the same branching distribution for
every node and that the leaves don't have to have unique labels.  If the
various {\em source coins\/} with distributions $\pi_i$'s are provided, and
the coins are tossed starting from the root in the fashion of DDG-trees, then
we arrive at leaves with the target probability distribution
$\dist{p_0,\dots,p_{m-1}}$.  Therefore, binarization tree can be regarded as
a generalization of DDG-tree with more than one source and unique labels on
leaves.



\end{document}